  \newcommand{\dom}{\mathop{\rm dom}\nolimits}   
  \newcommand{\ran}{\mathop{\rm ran}\nolimits}   
  \newcommand{\pws}{\mathop{\mathbf{P}}\nolimits}          
  \newcommand{\preco}[1]{\mbox{${}^\bullet{#1}$}} 
  \newcommand{\postc}[1]{\mbox{${#1}^\bullet$}}   
  \newcommand{\li}{\mathrel{\mathbf{li}}}
  \newcommand{\co}{\mathrel{\mathbf{co}}}
  \newcommand{\id}{\mathrel{\mathbf{id}}}
  \newcommand{\comp}{\mathrel{\smash{\leftrightarrow}}} 
  \newcommand{\cuts}[1]{\mathcal{C}(#1)}         
  \newcommand{\lines}[1]{\mathcal{L}(#1)}        
  \newcommand{\CC}[1]{\Gamma(#1)}            
  \newenvironment{proof}{\textit{Proof.}\quad}%
                        {\par\vspace{\baselineskip}\relax}
  \newtheorem{definition}{Definition}
  \newtheorem{proposition}{Proposition}
  \newtheorem{theorem}{Theorem}
  \newtheorem{lemma}{Lemma}
  \title{Between quantum logic and concurrency}
  \author{Luca Bernardinello
    \institute{%
          DISCo, Universit\`a degli studi di Milano--Bicocca\\
          viale Sarca 336 U14, Milano, Italia
    }
  \and
          Carlo Ferigato
    \institute{%
          JRC, Joint Research Centre of the European Commission\\
          via E. Fermi, 1 21027 Ispra, Italia
    }
  \and
          Lucia Pomello
    \institute{%
          DISCo, Universit\`a degli studi di Milano--Bicocca\\
          viale Sarca 336 U14, Milano, Italia
    }
  }
\begin{document}
    \maketitle
\begin{abstract}
  We start from two closure operators defined on the elements of
  a special kind of partially ordered sets, called causal nets.
  Causal nets are used to model histories of concurrent
  processes, recording occurrences of local states and of events.
  If every maximal chain (line) of such a partially ordered set meets
  every maximal antichain (cut), then the two closure operators coincide,
  and generate a complete orthomodular lattice. In this paper we
  recall that, for any closed set in this lattice, every line meets
  either it or its orthocomplement in the lattice, and show that
  to any line, a two-valued state on the lattice can be associated. 
  Starting from this result, we delineate a logical language whose
  formulas are interpreted over closed sets of a causal net, where
  every line induces an assignment of truth values to formulas.
  The resulting logic is non-classical; we show that maximal
  antichains in a causal net are associated to Boolean (hence
  ``classical'') substructures of the overall quantum logic.
\end{abstract}
\section{Introduction}
Partially ordered sets are a natural framework to model concurrent
processes, namely processes where
several components evolve in parallel, possibly interacting with each other.
In Petri net theory (see, for example, \cite{P77}, \cite{BF88})
the behaviour of concurrent
systems is modelled by causal nets,
a class of Petri nets which records a partial order between
occurrences of local states and events.

The partial order reflects relations of causal dependence, while
concurrency (or independence) is given by lack of
mutual order.
Petri axiomatized such partial orders, with the aim of describing
the flow of information in non-sequential processes, by drawing from
the laws of physics, and especially from the theory of relativity
(\cite{P_ijtp82}).
Among other properties, Petri introduced \emph{K-density}, which
requires that any maximal antichain (or \emph{cut}) in the partial order
and any maximal chain (or \emph{line}) have a non-empty intersection.
A line can be interpreted as the history of a sequential subprocess
(a particle, a signal), while cuts correspond to time instants,
where time is to be intended in a way analogous to the time
coordinate in relativistic spacetime.
K-density then requires that, at any instant, any sequential
subprocess must be in a definite local state or involved in a
change of state, modelled by an event (see~\cite{BF88}).

Two binary relations, corresponding to causal
dependence and to concurrency, can be defined on the elements
of an occurrence net. This suggests an analogy with relativistic
spacetime, and in particular with Minkowski spacetime. Several
authors have studied algebraic structures derived from Minkowski
spacetime (in particular, see~\cite{C02, CJ77}).
The main result related to our work consists in defining a lattice
whose elements are special sets of points in Minkowski, or even in
more general, spacetimes, and in showing that such a lattice is
complete and orthomodular.

Following similar ideas, but working on discrete partial orders,
in~\cite{BPR10} a closure operator is defined, whose closed sets
are subsets of points of the partial order. It is furthermore
shown that under a weak form of K-density, which holds for Petri
causal nets, the closed sets form a complete orthomodular lattice.
In the same work it is shown that, in a discrete and locally finite
framework, K-dense posets are exactly those in which, chosen an
arbitrary closed set, any line intersects either it or its
orthocomplement.

In this paper, we strengthen this last result by showing that
each line identifies a two-valued state, in the sense of quantum
logic. This suggests to look at the closed sets as propositions
in a logical language (see Section~\ref{s:logica}),
where orthocomplementation corresponds
to negation, so that any line induces an interpretation.
The resulting logical framework is obviously non-classical.

While lines (or maximal chains) correspond, in this sense, to
two-valued states in a quantum logic, cuts (or maximal antichains)
correspond to Boolean substructures of the logic, as stated in
Section~\ref{s:results}.

In the next section, we recall definitions and properties to
be used later, related to quantum logics, closure operators,
and Petri nets.
Section~\ref{s:results} collects our main results.
\section{Preliminaries}\label{s:preldef}
In this section, we recall some basic definitions and results,
which will be used in the rest of the paper, related to quantum logics,
Petri nets, and closure operators.
\subsection{Quantum logic}
The main reference for this section is~\cite{PP91}. A useful
treatment of quantum logic, also in relation to an alternative
formulation based on partial Boolean algebras, can be found
in~\cite{H89}.
\begin{definition} \cite{PP91}
  A \emph{quantum logic} $(P, \leq, 0,
    1,(.)')$ is a partially ordered set $(P, \leq)$, equipped with
    a minimum element, denoted by 0, and a maximum element, denoted
  by 1, and with a map
  $(.)':P \rightarrow P$ --- called
  \emph{orthocomplementation} --- such that the following
  conditions are satisfied (where $\lor$ and $\land$ denote,
  respectively, the least upper bound and the greatest lower bound
  with respect to $ \leq $, when they exist):
  $\forall x, y \in P$
  \begin{enumerate}
    \item  $(x')'  = x$
    \item  $x \leq y    \Rightarrow y' \leq x'$
    \item  $x \land x'  = 0$ and $x \lor x'  = 1$
  \end{enumerate}
  Two elements $x, y \in P$ are \emph{orthogonal}, denoted $x\, \perp
  \, y$, if $x \leq y'$. $P$ is \emph{orthocomplete} when
  every countable subset of pairwise orthogonal elements of $P$ has a least upper
  bound. Moreover, $P$ is \emph{orthomodular} when the
  following condition, called \emph{orthomodular law},
  \[
    x \leq y\quad  \Rightarrow\quad y = x \lor (y \land x')
  \]
  is satisfied.
\end{definition}
Figure~\ref{f:orthoql} shows one finite and one infinite quantum logic.
\begin{figure}
  \begin{center}
\includegraphics[width=0.3\linewidth]{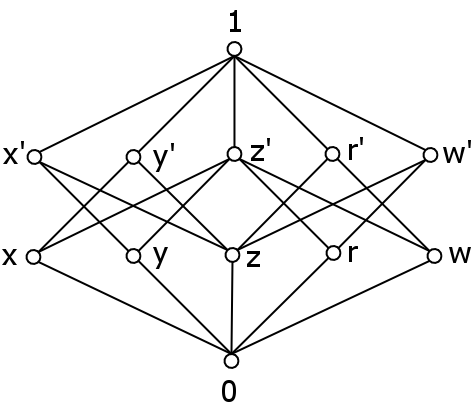}
\quad\quad
\includegraphics[width=0.46\linewidth]{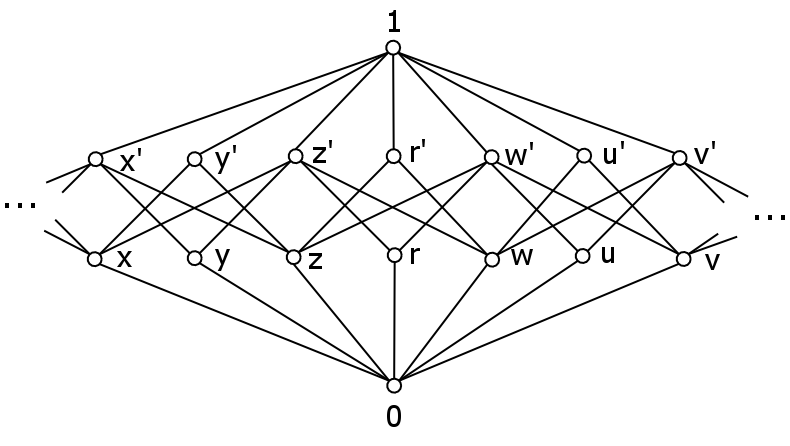}
  \end{center}
  \caption{Examples of orthomodular quantum logics.}\label{f:orthoql}
\end{figure}
In a quantum logic, $\perp $ is a symmetric relation and the De
Morgan laws hold: 
$(x \lor y)'= x' \land y'$, $(x \land y)'= x' \lor y'$
whenever one of the members of the equation exists.
We will sometimes use \emph{meet} and
\emph{join} to denote, respectively, $\land$ and $\lor$ with the
obvious extension to families of elements, denoted by $\bigwedge$ and
$\bigvee$. In the following we will sometimes use \emph{logic} as a shorthand
for \emph{quantum logic}.
%
\begin{definition}
Two elements $x, y$ of a logic $P$ are said to be
\emph{compatible} --- denoted by $x\comp y$ --- if there exist in
$P$ three mutually orthogonal elements $x_1, z$ and $y_1$
such that $x=x_1\lor z$ and $y=y_1\lor z$.
\end{definition}
\begin{definition}
A logic $P$ is called \emph{regular} if for any set
$\{x, y, z\}$ of pairwise compatible elements in $P$,
it holds that $x\comp (y \lor z)$.
\end{definition}

Every logic whose supporting poset is a lattice is regular
(\cite{PP91}, proposition 1.3.27).

A regular quantum logic admits an alternative characterization,
as a \emph{partial Boolean algebra} (see~\cite{H89}).
We will not give the formal definitions related to this view,
but only recall it informally.
A partial Boolean algebra is a family of partially overlapping
Boolean algebras, which share the minimum and the maximum
element, and satisfy a set of axioms on the shared elements.

\begin{definition} \cite{PP91}
A \emph{two-valued state} on a logic $P$ is a map
$s:P \rightarrow \{0, 1\}$
such that, for any sequence $(a_i)_{i \in I}$
of mutually orthogonal elements in $P$:
\begin{enumerate}
  \item $s(1) = 1$
  \item $s\big(\bigvee_{i\in I} a_i\big) = \Sigma_{i\in I}s(a_i)$
\end{enumerate}
\end{definition}

The set of two-valued states on a logic $P$ will be called
$\mathcal{S}_2(P)$.
If the elements of a quantum logic are interpreted as propositions
of a logical language, two-valued states can be considered as
consistent assignments of truth values to those propositions.
\subsection{Petri nets and causal nets}\label{s:nets}
A Petri net is a discrete model of a concurrent system, based on
the notions of local state (or condition) and of local change of
state (or event). Formally, they are bipartite graphs, where the
arcs encode immediate causal relations.

Here, we will focus on a special case of Petri nets, where the
underlying graph induces a partial order on the set of local states
and events.
\begin{definition} \label{d:net}
  A \emph{net} is a triple $ N = (B, E, F) $, where
  $B$ and $E$ are countable sets,
  $ F \subseteq ( B \times E ) \cup ( E \times B) $, and
  \begin{enumerate}
    \item $ B \cap E = \emptyset $;
    \item $ \dom(F) \cup \ran(F) =  B \cup E$.
  \end{enumerate}
\end{definition}

The elements of $B$ are called \emph{conditions}, the elements of $E$
are called \emph{events} and $F$ is called \emph{flow relation}.  

The standard graphical notation for nets represents conditions as
circles, events as squares and the flow relation as directed arcs.

The intuitive meaning associated to conditions and events is that
conditions represent \emph{local states} of a system while events
represent \emph{local changes of state}.

For each $x \in B \cup E$, define
  $ \preco{x} = \{ y \in B \cup E \ | \ (y, x) \in F \} $,
  $ \postc{x} = \{ y \in B \cup E \ | \ (x, y) \in F \} $.
For $e \in E$, an element $b \in B$ is a \emph{precondition} of 
$e$ if $b \in \preco e$; it is a \emph{postcondition} of $e$
if $b \in \postc e$.
In the basic model of Petri nets, which we refer to in this
paper, a condition is either \emph{true} or \emph{false};
an event can \emph{fire} (happen) if its preconditions are
all true and its postconditions are false; the effect of
an event firing consists in making its preconditions false
and its postconditions true.

A net $N = (B, E, F)$ is \emph{simple} if for each $x, y \in B \cup E$:
$(\preco{x} = \preco{y}$ and $\postc{x} = \postc{y}) \Rightarrow x=y$. 

By $F^+$ we denote the irreflexive, transitive closure of $F$,
by $F^*$ we denote $F^+ \cup \id_X$.

\begin{definition}\label{d:causal_net}
A \emph{causal net} is a net in which the following conditions hold:
    \begin{enumerate}
      \item $\forall b \in B: | \{ e \in E \mid
            (e, b) \in F \}| \le 1 \ \land \ | \{ e \in E \ | \ (b,
            e) \in F \}  | \le 1$;
      \item $\forall x,y \in B \cup E:(x,y) \in F^+
        \Rightarrow (y,x) \not= F^+$.
    \end{enumerate}
\end{definition}

A causal net is a net without cycles, and such that branches can occur
only at events. In the standard terminology, no \emph{choices} are
allowed.

The structure $(X,\sqsubseteq)$ derived from a causal net $N$ 
by putting $X=B \cup E$ and $\sqsubseteq=F^*$ is a partially ordered
set (shortly a \emph{poset}). A subset $S \subseteq X$ is
said to be \emph{convex} if, for each pair of its elements, $S$
contains the interval between them:
$\forall x, y \in S:$ $[x,y] \subseteq S$ where $[x,y]=\{z \in X \ |
\ x \sqsubseteq z \sqsubseteq y\}$.

\begin{definition}\label{d:local_finiteness}
$(X, \sqsubseteq)$ is \emph{interval-finite}
$\Leftrightarrow \forall x,y \in X: |[x,y]|< \infty$.
$(X, \sqsubseteq)$ is \emph{degree-finite}
$\Leftrightarrow \forall x \in X:  
|\preco x| < \infty \textrm{ and } |\postc x| < \infty$. When $(X,
\sqsubseteq)$ is both interval and degree-finite, we will say that
it is \emph{locally finite}. We will apply these terms also to the causal
net from which $(X, \sqsubseteq)$ is obtained.\\
\end{definition}
When using discrete partial orders to model processes of
real systems, local finiteness is a natural
assumption since
synchronization of infinite events or
causal dependence at infinite distance between the events cannot be
realized.

On the poset $(X,\sqsubseteq)$, the relations
$\li\ =\ \sqsubseteq \cup \sqsubseteq^{-1}$,
and $\co\ = (X \times X)\setminus \li$ can be
defined. Intuitively, $x \li y$ means that $x$ and $y$ are connected
by a causal relation while $x \co y$ means that $x$ and $y$ are
causally independent.
The relations $\li$ and $\co$ are symmetric and not transitive. 
Moreover, $\li$ is a reflexive relation, while $\co$ is irreflexive.

Given an element $x \in X$ and a set $S \subseteq X$, 
we write $ \ x \co S$  if $\forall y \in S: x \co y$.
Given two sets $S_1 \subseteq X$ and $S_2 \subseteq X$, 
we write $S_1 \co S_2$ if
$\forall x \in S_1, \forall y \in S_2: x \co y$.

A \emph{clique} of a binary relation is a set of pairwise related elements;
a clique of $\co \cup \id_X$ will be also called a \emph{coset};
a coset made of conditions only will be called a \emph{B-coset}.
Maximal cliques of $\co \cup \id_X$ and $\li$ are called, respectively,
\emph{cuts} and \emph{lines}. Given a poset $(X,\sqsubseteq)$, its
cuts and lines will be denoted, respectively, as:\\
\[
\cuts{X} =
     \{c \subseteq X \ | \ c \ \textrm{is a maximal clique of } \co
     \cup \id_X \}
\]
  and
\[
\lines{X} =
     \{\ l \subseteq X \ | \ l \ \textrm{is a maximal clique of } \li\}.
\]
We assume the axiom of choice, so any clique
of $\co \cup \id_X$ and of $\li$ can be extended to a maximal
clique. This will be used in particular in the proof of Theorem
\ref{t:line_2vs} below.

In the following, $(X,\sqsubseteq)$ will always be the poset obtained
from a causal net $N$, we will indicate the cuts and lines in $X$
as well as $\cuts{N}$ and $\lines{N}$. When a cut is
composed exclusively of conditions, it will be referred to as a
\emph{B-cut}. If $A\subseteq X$, the definition is restricted
naturally to the maximal cliques in $A$ with the notation $\cuts{A}$
and $\lines{A}$.

\begin{definition}\label{d:K-density}
$(X,\sqsubseteq)$ is \emph{K-dense} $\Leftrightarrow 
\forall c \in \cuts{N}, \forall l \in \lines{N}: c \cap l
\not= \emptyset$.
\end{definition}
From their definition, it follows immediately that, if a line
and a cut have a non-empty intersection, then the intersection
will consist in exactly one point. When this is the case, we
will say that the line and the cut \emph{meet} at a point, or
that the line \emph{crosses} the cut, or viceversa.
%
\subsection{Closure operators on causal nets}\label{s:causal_closure}
%
In this section, we recall the definition and some basic
properties of two closure operators on the set of elements
of a partially ordered set and, more specifically, on the
set of elements of a causal net.

In general, by closure operator on a set $Z$, we mean a map
$\gamma: \pws(Z) \rightarrow \pws(Z)$ (where $\pws(Z)$ denotes
the powerset of $Z$), satisfying the following,
for all $A, B \subseteq Z$:
\begin{enumerate}
  \item $A \subseteq \gamma(A)$ (increasing)
  \item if $A \subseteq B$, then $\gamma(A) \subseteq \gamma(B)$ (monotone)
  \item $\gamma(\gamma(A)) = \gamma(A)$ (idempotent)
\end{enumerate}
A subset $A$ of $Z$ is called \emph{closed} with respect to $\gamma$
if $A = \gamma(A)$.

The first operator with which we will deal here is defined
indirectly, starting from the
definition of \emph{causally closed sets}. These form a family of sets
closed by intersection, and with a maximum, with respect to set
inclusion. As usual, the closure of an arbitrary set
$A \subseteq B \cup E$ is by definition
the intersection of all the causally closed sets that contain $A$.

Let $N = (B, E, F)$ be a locally-finite causal net,
and $X = B \cup E$.
A subset of $X$ is causally closed if it is convex with respect to the
partial order induced by $N$, and if it is closed with respect
to local causes (preconditions) and local effects (postconditions).
\begin{definition} \label{d:sottoinsiemi_causalmente_chiusi}
A set $C \subseteq X = B \cup E$ is a \emph{causally closed set} if
\begin{itemize}
\item [\emph{(i)}] $\forall e \in E$, $\preco e \subseteq C \Rightarrow
e \in C$,
\item [\emph{(ii)}] $\forall e \in E$, $\postc e \subseteq C \Rightarrow
e \in C$,
\item [\emph{(iii)}] $\forall e \in E$, $e \in C \Rightarrow
\preco e  \cup \postc e \subseteq C$,
\item [\emph{(iv)}] $\forall x,y \in C$, $x \li y \Rightarrow
[x,y] \subseteq C$.
\end{itemize}
The family of causally closed sets of $N$ will be called $\CC{N}$.
\end{definition}
Define the \emph{border} of a subset $A \subseteq X$ as those
elements of $A$ which are directly linked, by F-arcs, to elements outside
of $A$:
$\beta(A) = \{ x \in A \mid
    \exists y \in X\setminus A: (x,y) \in F \cup F^{-1} \}$.
As shown in~\cite{BPR10}, the border of a causally closed set is
made of local states only:
\[
  \forall A \in \CC{N}\quad \beta(A) \subseteq B
\]
Moreover, $\CC{N}$ is closed by
intersection and $\emptyset \in \CC{N}$ and $B \cup E \in
\CC{N}$.
Hence, the family $\CC{N}$  forms a complete lattice --- non
orthocomplemented in the general case --- where meet is given by set
intersection and join is given by the causal closure of
the set union of the operands.
The associated closure operator, here denoted by $\phi$,
can now be defined as usual.
Let $X = B \cup E$.
\begin{definition}\label{d:costruzione_sottoinsiemi_causalmente_chiusi}
Define $\phi: \pws(X) \rightarrow \pws(X)$ as follows:
$\forall A \subseteq X$,
$\phi(A)=\bigcap \{C_i \ | \ C_i \in \CC{N}$ and $A \subseteq C_i\}$.
\end{definition}
\begin{figure}
  \begin{center}
\includegraphics[width=0.3\linewidth]{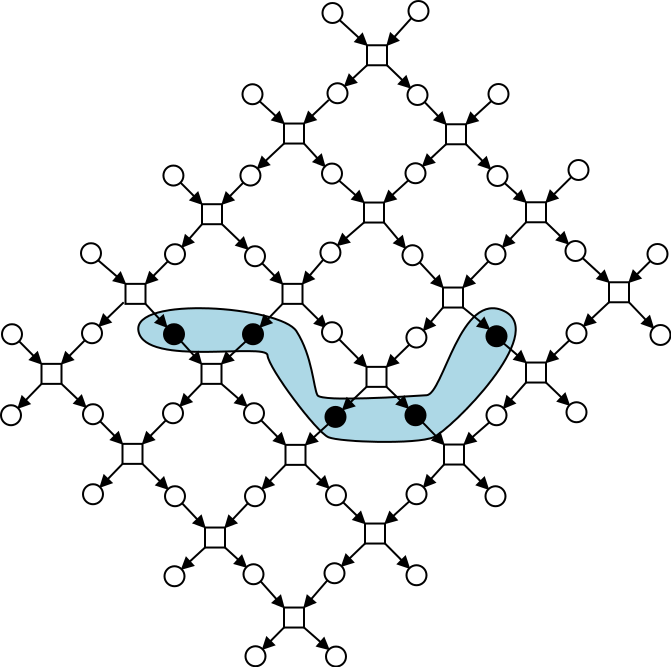}
\quad
\includegraphics[width=0.3\linewidth]{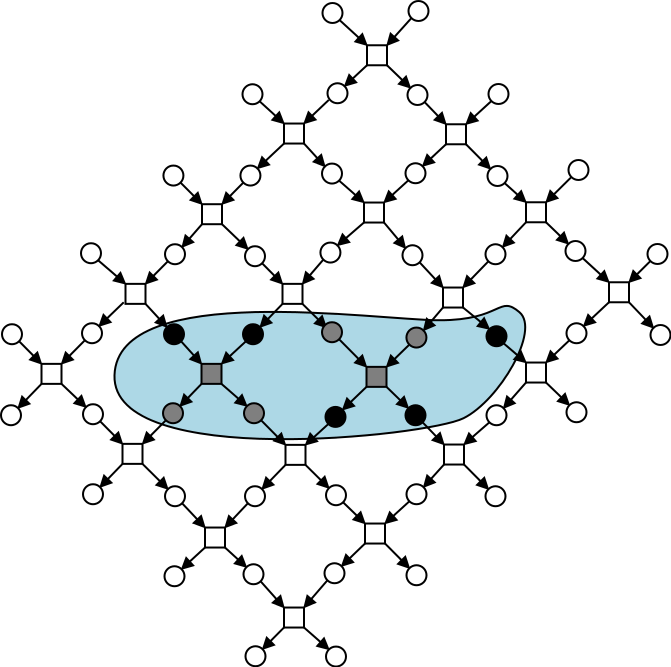}
\quad
\includegraphics[width=0.3\linewidth]{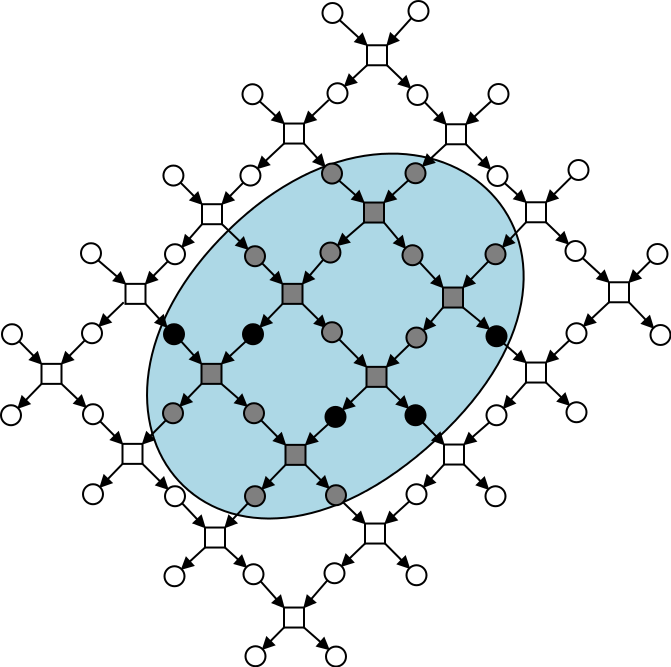}
  \end{center}
  \caption{Iterative closure of a B-coset.}\label{f:it_clos}
\end{figure}
So, given two elements $A$ and $B$ in $\CC{N}$, $A\vee B$ is defined
to be $\phi(A\cup B)$.
As shown in~\cite{BPR10}, the causal closure of a B-coset can be
defined by means of an iterative procedure, which adds new
elements according to the axioms in
Definition~\ref{d:sottoinsiemi_causalmente_chiusi}.
An example is shown in Figure~\ref{f:it_clos}.
The picture on the left shows a B-coset; the middle picture
shows an intermediate step in the procedure, while the picture
on the right shows the final step (other intermediate steps
are omitted).

The second closure operator is defined following a well-known
construction, to be found in \cite{B79}, and recalled below.

Given a set $Z$, a symmetric relation $\alpha \subseteq Z \times Z$
and a subset $A$ of $Z$, define
\begin{displaymath} 
  A'=\{x \in Z \ | \ \forall y \in A: (x,y) \in \alpha\}.
\end{displaymath}
By applying twice this operator,
we get a new operator $(.)''$, which can be shown to be
a closure operator.
A subset $A$ of $Z$ is \emph{closed}  
if $A = A''$. The family $L(Z)$ of all closed sets of $Z$, ordered
by set inclusion, is then a complete lattice.
When $\alpha$ is also irreflexive, 
the operator $(.)'$, applied to elements of $L(Z)$,
is an orthocomplementation; the structure
$\mathbf{L} = (L(Z), \subseteq, \emptyset, Z, (.)')$ 
then forms an orthocomplemented complete lattice \cite{B79}. 
\begin{figure}
  \begin{center}
\includegraphics[width=0.4\linewidth]{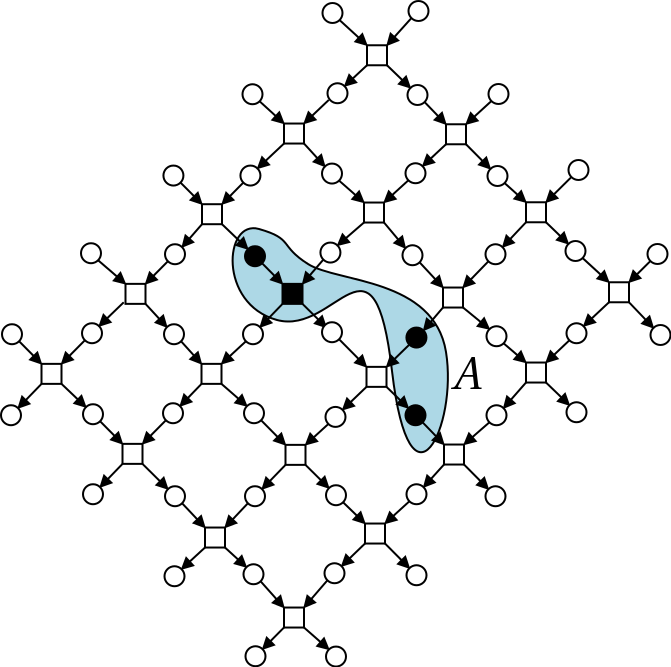}
\includegraphics[width=0.4\linewidth]{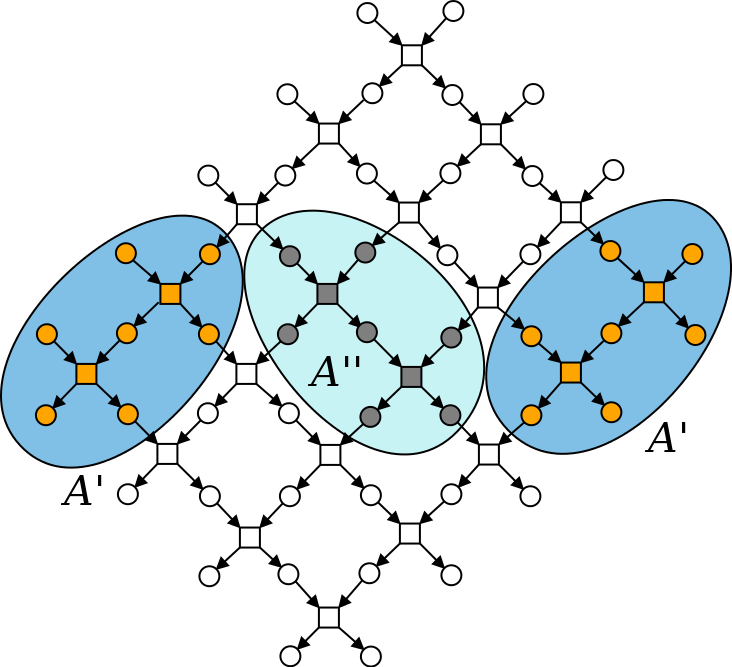}
  \end{center}
  \caption{A set $A$, its orthocomplement $A'$,
           and its closure $A''$.}\label{f:setclosure}
\end{figure}
Let us now consider the poset $(X,\sqsubseteq)$ derived from a causal net
$N$ together with the irreflexive relation $\co$ as in Section
\ref{s:nets}. By applying the
construction above, with $\co$ as irreflexive and symmetric
relation, to the subsets of $X$, we
obtain the orthocomplemented complete lattice
$(L(N), \subseteq, \emptyset, X, (.)')$, where $L(N)$ denotes the family
of closed subsets of $X$.

Figure~\ref{f:setclosure} shows a set $A$ of elements of
a causal net, together with its orthocomplement, $A'$,
and its closure, $A''$.

In~\cite{BPR10} it is shown that this lattice is orthomodular if the
construction above is applied to a locally finite poset satisfying
a weak form of
K-density, called N-density. All causal nets are N-dense (\cite{BF88}),
hence the lattice of closed sets derived from the concurrency relation
in a locally finite causal net is complete orthomodular.

As already noted, in general, the lattice of causally closed sets of a locally finite
causal net is not even orthocomplemented; however, if the causal net
is K-dense, then the two closure operators coincide, and the lattice
of causally closed sets is complete orthomodular.
\begin{theorem} \label{t:chiusieCC}\cite{BPR10}
Let $N = (B, E, F)$ be a locally finite, K-dense causal net.
Let  $A \subseteq B \cup E$.
Then $A \in \CC{N}  \iff  A \in L(N)$.
\end{theorem}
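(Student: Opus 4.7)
The plan is to establish the two inclusions separately. One direction relies only on the local structure of causal nets; K-density plays its essential role in the other direction. For $L(N) \subseteq \CC{N}$, assume $A = A''$ and verify the four clauses of Definition~\ref{d:sottoinsiemi_causalmente_chiusi}. The uniform strategy is that if some expected element $z$ were missing from $A$, some $w \in A'$ would witness this by $w \li z$; exploiting the causal-net axioms, in particular that each condition has at most one pre- and one post-event, we force $w$ to be $\li$-related to an element already in $A$, contradicting $w \in A'$. For convexity (iv), if $x, y \in A$ with $x \sqsubseteq z \sqsubseteq y$ and $w \li z$, then $w$ is comparable with $x$ or with $y$. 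For (iii), if $e \in A$ and $b \in \preco e$, any $w \li b$ is forced to satisfy $w \li e$, because $b$'s unique post-event is $e$. For (i) (and symmetrically (ii)), given $\preco e \subseteq A$ and a hypothetical $w \in A'$ with $w \li e$, we extend the chain through $w$, $e$, and some $b \in \preco e$ to a line $l$ using the axiom of choice; then $b \in l \cap A$ gives $b \li w$, a contradiction.

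For $\CC{N} \subseteq L(N)$, assume $A \in \CC{N}$ and show $A = A''$. The inclusion $A \subseteq A''$ is automatic, so we focus on $A'' \subseteq A$ and argue by contradiction. Suppose $x \in A'' \setminus A$. First note $x \notin A'$: if $x$ were in both $A'$ and $A''$, the definition of $A''$ applied with $y = x \in A'$ would require $x \co x$, contradicting irreflexivity. Hence there is $z \in A$ with $x \li z$; without loss of generality $x \sqsubset z$. Choose a line $l$ through $x$ and $z$, let $z_0$ be the minimum of $l \cap A$ and $v$ its immediate predecessor in $l$; then $z_0 \in \beta(A) \subseteq B$, so $v \in E \setminus A$ and $z_0 \in \postc v$. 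By clause (ii) applied to $v$, $\postc v \not\subseteq A$, yielding $u \in \postc v \setminus A$. The target is to promote $u$ --- or, if needed, an element obtained by iterating this construction along $l$ --- to a witness $y \in A'$ with $y \li x$, contradicting $x \in A''$.

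The main obstacle is verifying that the candidate witness is concurrent with every element of $A$, not merely those encountered locally along $l$. Local reasoning handles concurrency with $z_0$ (two distinct postconditions of the same event must be concurrent, else a cycle would arise from the uniqueness of pre-events on conditions), but for an arbitrary $z \in A$ one must rule out indirect causal connections from $u$ to $z$. This is precisely where K-density intervenes: any hypothetical chain from $u$ to some $z \in A$ extends to a line which, together with a cut through $u$, must intersect; K-density forces the crossing point to lie in $A$ in a way that contradicts either $u \notin A$ or the minimality of $z_0$ on $l$. Without K-density, the lattice $\CC{N}$ is not even orthocomplemented, so the equivalence genuinely depends on this hypothesis.
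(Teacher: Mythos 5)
The paper does not actually prove this theorem---it is imported from \cite{BPR10}---so there is no in-paper argument to compare against; I evaluate your proposal on its own. Your first direction, $L(N) \subseteq \CC{N}$, is essentially sound: each clause of Definition~\ref{d:sottoinsiemi_causalmente_chiusi} follows by a local argument, using convexity of $\li$-intervals for (iv) and the fact that a condition has at most one pre-event and one post-event for (i)--(iii), and it indeed needs no K-density. One small slip: in clause (i), when $w \sqsubset e$, the set $\{w, e, b\}$ for an arbitrary $b \in \preco e$ need not be a chain; the correct move is that any $F$-path from $w$ into $e$ must enter $e$ through \emph{some} precondition $b_0$, so $w \sqsubseteq b_0 \in A$, which already contradicts $w \in A'$ without invoking a line.

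The direction $\CC{N} \subseteq L(N)$ has a genuine gap, and you have located it yourself: the candidate witness $u \in \postc v \setminus A$ need not belong to $A'$, and nothing you write closes that hole. The element $u$ can be causally connected to $A$ through paths entirely disjoint from $l$ (e.g.\ $u \sqsubset c$ for a border condition $c$ of $A$ reached via events outside $A$), and ``iterating the construction along $l$'' cannot detect such connections since they never meet $l$. The appeal to K-density---``any hypothetical chain from $u$ to some $z \in A$ extends to a line which, together with a cut through $u$, must intersect''---is not an argument: you do not say which cut, and a line through $u$ and a cut through $u$ intersect at $u$ itself, which yields no contradiction. A workable strategy is global rather than local: starting from $x \notin A$, use clauses (i)--(iii) of causal closedness together with the causal-net branching axioms to grow, one covering step at a time, a maximal chain $\lambda$ through $x$ that avoids $A$ entirely (if the current endpoint is a condition outside $A$, its unique pre- or post-event cannot be in $A$ by clause (iii); if it is an event outside $A$, clauses (i) and (ii) supply a pre- or postcondition outside $A$); local finiteness guarantees the resulting chain is maximal. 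Then take a B-cut $\tau_1$ of $A$, extend it to a B-cut $\tau$ of $N$, and apply K-density to $\tau$ and $\lambda$, exactly as in the proof of Proposition~\ref{p:lmeetsc}: the crossing point $b$ satisfies $b \co \tau_1$, and an analogue of Lemma~\ref{l:tagliechiusi} for causally closed sets (using convexity and $\beta(A) \subseteq B$) upgrades this to $b \co A$, so $b \in A'$ and $b \li x$, whence $x \notin A''$. Your sketch never produces such a witness, so as it stands the harder direction is not proved.
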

%
%
\section{Towards a logical view of closed sets}\label{s:results}
%
In this section, we collect the main results of our
contribution. The first states that every line in a
K-dense causal net $N$ identifies a two-valued state in
the lattice (or quantum logic) of closed sets, $L(N)$.
This suggests to look at the closed sets as propositions
in a logical language (see Section~\ref{s:logica}).
The second result concerns the relation between B-cuts
and Boolean subalgebras of a quantum logic.

Throughout this section, $N = (B, E, F)$ will denote an
arbitrary locally finite, K-dense causal net.

The following lemma states that any closed set can be
obtained as the closure of any of its local B-cuts.
\begin{lemma}\label{l:tagliechiusi}
  Let $A \in L(N)$, and $\tau$ a B-cut of $A$.
  Then $A = \tau''$.
\end{lemma}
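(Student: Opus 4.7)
The plan is to prove the two inclusions $\tau'' \subseteq A$ and $A \subseteq \tau''$ separately. The first is automatic: from $\tau \subseteq A$ the anti-monotonicity of $(.)'$ gives $A' \subseteq \tau'$, hence $\tau'' \subseteq A'' = A$ after a second application of $(.)'$. For the converse it is equivalent to show $\tau' \subseteq A'$, from which $A = A'' \subseteq \tau''$. I would therefore fix $y \in \tau'$, meaning $y \co z$ for every $z \in \tau$, and suppose for contradiction that $y \li x$ for some $x \in A$.

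A first observation is that $y \notin A$: if $y$ were in $A$, then $y \notin \tau$ (otherwise $y \co y$ would fail, as $\co$ is irreflexive), and the set $\tau \cup \{y\}$ would be a coset of $A$ strictly extending $\tau$, contradicting the maximality of $\tau$ as a cut of $A$. In particular $y \neq x$, and by symmetry (the case $x \sqsubset y$ is analogous, with $\postc{\cdot}$ in place of $\preco{\cdot}$) it suffices to treat $y \sqsubset x$. Choose an $F$-path $y = w_0, w_1, \ldots, w_k = x$ witnessing this strict inequality, and let $j$ be the least index with $w_j \in A$; then $1 \le j \le k$. I claim that $w_j$ is a condition: were $w_j$ an event, then $w_{j-1}$ would lie in $\preco{w_j}$, and axiom (iii) of Definition~\ref{d:sottoinsiemi_causalmente_chiusi}, applicable because Theorem~\ref{t:chiusieCC} makes $A$ causally closed, would force $w_{j-1} \in A$, contradicting the minimality of $j$.

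Since $\tau$ is a maximal cut of $A$ and $w_j \in A$, either $w_j \in \tau$ or $w_j \li z$ for some $z \in \tau$. The subcases $w_j \in \tau$ and $w_j \sqsubset z$ collapse into $y \sqsubset w_j \li z$, hence $y \li z$, which immediately contradicts $y \co z$. The main obstacle is the remaining subcase $z \sqsubset w_j$ for some $z \in \tau$. Here I would invoke the causal net axiom that a condition has at most one incoming $F$-arc: every $F$-path from $z$ up to $w_j$ must end with $w_{j-1} F w_j$, giving $z \sqsubseteq w_{j-1} \sqsubset w_j$. Thus $w_{j-1}$ lies in the interval $[z, w_j]$, and since $A$ is convex (axiom (iv) of a causally closed set) we conclude $w_{j-1} \in A$, once again contradicting the choice of $j$. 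Every subcase yields a contradiction, so $\tau' \subseteq A'$ and hence $A = \tau''$.
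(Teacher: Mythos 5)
Your proof is correct and follows essentially the same route as the paper's: both reduce the nontrivial inclusion to showing that no element of $\tau'$ can be $\li$-related to an element of $A$, and both derive the contradiction by following an $F$-path across the boundary of $A$, observing that the entry point on the $A$ side is a condition, and then playing the maximality of $\tau$ against the convexity of $A$. Your version merely runs the path from $\tau'$ into $A$ rather than outward and spells out, via the minimal index $j$ and the unique-incoming-arc axiom, the step the paper compresses into ``$b$ must be concurrent to all the elements of $\tau$, because $A$ is convex.''
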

\begin{proof}
  The closure operator is idempotent and monotone;
  hence, from $\tau \subseteq A$,
  we get $\tau'' \subseteq A'' = A$. To show the inclusion in the
  other direction, take $x \in A$. If $x \in \tau$, then $x \in \tau''$.
  Suppose $x \not\in \tau$. Since $\tau$ is a B-cut, $x \li z$ for some
  $z \in \tau$. By way of contradiction, suppose $x \not\in \tau''$;
  then, there must be $w \in \tau'$, with $x \li w$.

  Put $x < z$ (the symmetric case is treated analogously); then $x < w$,
  because $z \co w$. Choose a path from $x$ to $w$; this path must cross
  the border of $A$ at an $F$-arc from a condition $b$ to an event $e$.
  Then $b$ must be concurrent to all the elements of $\tau$, because $A$ is convex,
  but this contradicts the hypothesis that $\tau$ is a cut (maximal
  antichain) of $A$.
\end{proof}
The previous lemma implies that, for each $x \in B \cup E$,
if $x \co \tau$, then $x \co A$. This will be used in later proofs.

Now we can prove a crucial relation between lines and closed sets.
This is actually a corollary of Theorem~3.2 in~\cite{BPR10}, but
we think that a direct proof could be useful. It says that, given
a closed set $A$, a line crosses either $A$ or $A'$ (but not both).
The statement is illustrated in Figure~\ref{f:linea_chiuso}, where
a line is shown with a thicker stroke.
\begin{figure}
  \begin{center}
\includegraphics[width=0.5\linewidth]{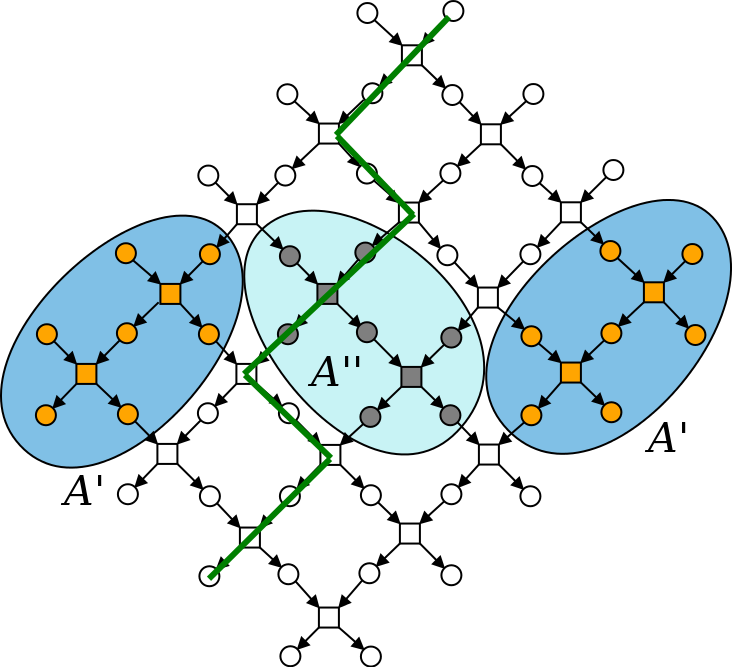}
  \end{center}
  \caption{A line crosses either a closed set or its orthocomplement.}%
  \label{f:linea_chiuso}
\end{figure}
\begin{proposition}\label{p:lmeetsc}
  Let $A \in L(N)$, and $\lambda \in \lines{N}$.
  Then
  \[
      \lambda \cap A \not= \emptyset
              \Leftrightarrow
      \lambda \cap A' = \emptyset
  \]
\end{proposition}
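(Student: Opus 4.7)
The plan is to treat the two implications separately, with the left-to-right direction being essentially immediate and the right-to-left direction exploiting K-density together with Lemma~\ref{l:tagliechiusi}.

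For the easy direction, assume $x \in \lambda \cap A$ and suppose for contradiction that $y \in \lambda \cap A'$. By the definition of $A'$ as the $\co$-orthocomplement we have $x \co y$, but $x, y \in \lambda$ forces $x \li y$, and $\co$ and $\li$ are disjoint on distinct elements; the case $x = y$ is impossible because $A \cap A' = \emptyset$. Hence $\lambda \cap A' = \emptyset$.

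For the harder direction, I would argue by contradiction: assume $\lambda \cap A' = \emptyset$ and $\lambda \cap A = \emptyset$. If $A = \emptyset$, then $A' = X$ and the hypothesis already fails, so I may assume $A$ is non-empty and therefore admits a B-cut $\tau$ (local cuts of causally closed sets are the natural ``slices'' of $A$, consisting only of conditions). By Lemma~\ref{l:tagliechiusi} we have $A = \tau''$, and in particular the remark following that lemma gives: for every $z \in X$, if $z \co \tau$ then $z \co A$, i.e.\ $z \in A'$. Now $\tau$ is a coset of $N$ (its elements are pairwise concurrent in $A$, hence in $N$), so by the axiom of choice it extends to a maximal coset $c \in \cuts{N}$. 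Applying K-density to $\lambda$ and $c$, there exists a unique point $z \in \lambda \cap c$.

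The final step is to split on whether $z \in \tau$. If $z \in \tau$, then $z \in A$, contradicting $\lambda \cap A = \emptyset$. Otherwise $z \in c \setminus \tau$, so $z$ is concurrent to every element of $\tau$; by the corollary of Lemma~\ref{l:tagliechiusi} this forces $z \in A'$, contradicting $\lambda \cap A' = \emptyset$. Either way we reach a contradiction, so $\lambda \cap A \neq \emptyset$. The main subtlety I anticipate is the legitimacy of choosing a B-cut $\tau$ of a non-empty closed set $A$ and extending it to a maximal antichain of $N$ rather than merely of $A$; this is where the locally finite, K-dense hypothesis is implicitly used, via Theorem~\ref{t:chiusieCC} which identifies $L(N)$ with $\CC{N}$ and thus ensures $A$ has the structure needed for such local cuts to exist.
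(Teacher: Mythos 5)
Your proof is correct and follows essentially the same route as the paper's: the easy direction from the disjointness of $\co$ and $\li$, and the hard direction by taking a B-cut of $A$, extending it to a cut of $N$, applying K-density, and invoking Lemma~\ref{l:tagliechiusi} (via the remark following it) to place the crossing point in $A'$. Your version is marginally more careful — you handle $A = \emptyset$, split explicitly on whether the crossing point lies in $\tau$, and extend $\tau$ to an arbitrary cut rather than insisting on a B-cut of $N$, which sidesteps a small unargued step in the paper — but these are refinements of the same argument, not a different one.
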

\begin{proof}
  Suppose $\lambda \cap A \not= \emptyset$. Any element in $A'$ is
  concurrent with any element in $A$. Since $\lambda$ is a clique
  of the $\li$ relation, no element in $\lambda$ can be in $A'$.

  Suppose now that $\lambda \cap A = \emptyset$. Take a B-cut of $A$,
  say $\tau_1$. This is a B-coset of $N$, and can be extended to a
  B-cut of $N$, say $\tau$. Since $N$ is K-dense, $\tau$ crosses
  $\lambda$ at a point $b \in B$. Then $b \co \tau_1$, and,
  by Lemma~\ref{l:tagliechiusi}, $b \co A$, which means $b \in A'$.
\end{proof}
Building on the previous proposition, we now define a map associated to
a line in $N$. The map can be seen as the characteristic map of
the family of closed sets that cross the given line.
\begin{definition}
  Let $\lambda$ be a line of $N$. Define
  $\Delta(\lambda) = \{ A \in L(N) \mid A \cap \lambda \not= \emptyset \}$,
  and $\delta_{\lambda}: L(N) \rightarrow \{0, 1 \}$ in this way:
  for each $A \in L(N)$, $\delta_{\lambda}(A) = 1$ if
  $A \in \Delta(\lambda)$, 0 otherwise.
\end{definition}
\begin{theorem}\label{t:line_2vs}
  The map $\delta_{\lambda}$ is a two-valued state of $L(N)$.
\end{theorem}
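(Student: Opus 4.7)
The plan is to verify, for the map $\delta_\lambda$, the two clauses in the definition of two-valued state. The first clause, $\delta_\lambda(X)=1$, is immediate since $X$ is the maximum of $L(N)$ and $\lambda\subseteq X$, so $\lambda\cap X=\lambda\neq\emptyset$. All of the work goes into the additivity clause: for any countable family $(A_i)_{i\in I}$ of pairwise orthogonal closed sets, with join $A=\bigvee_i A_i$ in $L(N)$, I want to show $\delta_\lambda(A)=\sum_i\delta_\lambda(A_i)$. Since the right-hand side is a $\{0,1\}$-valued sum, this amounts to proving that $\lambda$ meets $A$ iff it meets exactly one of the $A_i$.

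The key construction is to assemble a single B-cut of $N$ from local data on each $A_i$. For each nonempty $A_i$ I would pick a B-cut $\tau_i\subseteq A_i$. Pairwise orthogonality means $A_i\subseteq A_j'$ for $i\ne j$, so conditions in $\tau_i$ are concurrent with conditions in $\tau_j$; consequently $\bigcup_i\tau_i$ is a B-coset of $N$, and by the axiom of choice it extends to a B-cut $\tau$ of $N$. K-density now gives a unique point $b\in\tau\cap\lambda$, and I split on where $b$ lives. If $b\in\tau_i$ for some (necessarily unique) $i$, then $b\in A_i\cap\lambda$, so $\delta_\lambda(A_i)=1$; for any $j\ne i$, every element of $A_j$ is concurrent with $b\in A_i\subseteq A_j'$, so cannot lie on the $\li$-clique $\lambda$, giving $\delta_\lambda(A_j)=0$; since $A_i\subseteq A$, also $\delta_\lambda(A)=1$. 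If instead $b\notin\bigcup_i\tau_i$, then $b\co\tau_i$ for every $i$, and by the remark following Lemma~\ref{l:tagliechiusi} we have $b\co A_i$, i.e., $b\in A_i'$, for every $i$; hence $b\in\bigcap_i A_i'$, which by De~Morgan in the orthocomplemented lattice $L(N)$ equals $(\bigvee_i A_i)'=A'$; Proposition~\ref{p:lmeetsc} then forces $\lambda\cap A=\emptyset$, and $A_i\subseteq A$ yields $\delta_\lambda(A_i)=0$ for every $i$. In both cases the two sides agree.

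The main obstacle is exactly the passage from local information (each line meets $A_i$ or $A_i'$) to a single verdict for the join $A$: a priori the witnesses $x_i\in\lambda\cap A_i'$ may all be different points of $\lambda$, and a naive De~Morgan argument does not produce an element of $\lambda$ in their intersection. The trick of gluing the $\tau_i$ into one B-coset and invoking K-density once for all is what bypasses this difficulty; the only other subtlety is checking that $\bigcup_i\tau_i$ really is a B-coset (including the countably infinite case, where pairwise concurrency still suffices) and using the orthogonality $A_i\cap A_j=\emptyset$ to guarantee that $b$ belongs to at most one $\tau_i$ in the first case.
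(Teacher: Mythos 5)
Your proof is correct and follows essentially the same route as the paper's: both glue B-cuts of the $A_i$ into a single B-coset, extend it to a B-cut of $N$, and invoke K-density once to locate a point of $\lambda$ that decides the verdict for the join. Your organization (case split on where the K-density point lands, rather than on whether some $A_i$ is already crossed) is if anything slightly cleaner, and it makes explicit the small step the paper glosses over, namely that the intersection point cannot lie in any $\tau_i$ when all $\delta_\lambda(A_i)=0$.
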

\begin{proof}
  Let $(A_i)_{i \in I}$ be a family of pairwise orthogonal closed sets.
  Then, for each $i, j, i\not= j$, $A_i \subseteq A_j'$. Hence, if
  $\lambda$ crosses one of the $A_i$s, it cannot cross any of the
  others because of Proposition \ref{p:lmeetsc}.
  From this, it follows that,
  if $\delta_{\lambda}(A_i) = 1$, then $\delta_{\lambda}(A_j) = 0$
  for any $j \not= i$, and $\delta_{\lambda}(\bigvee A_i) = 1$.

  Suppose now that $\delta_{\lambda}(A_i) = 0$ for each $i \in I$.
  We will show that $\lambda$ does not cross $\bigvee A_i$.
  For each $i \in I$, take a B-cut of $A_i$. The elements of all the
  B-cuts of the $A_i$'s are pairwise
  concurrent; hence their union is a B-coset, and can be extended to
  a B-cut of $N$. Since $N$ is K-dense, this cut crosses $\lambda$ at
  a point, say $s$, and $s$ is concurrent with all the $A_i$s.
  By way of contradiction, suppose now that
  $\bigvee A_i$ crosses $\lambda$ at a point $b$. Since this point is
  in the closure of $\bigcup A_i$, it must satisfy $b \co (\bigvee A_i)'$.
  But this implies $b \co s$, contradicting the hypothesis that $b$
  and $s$ are on the same line.
\end{proof}
The previous results relate lines and two-valued states in
quantum logics. We now briefly point out a dual relation
between cuts in a causal net and Boolean subalgebras in the
lattice of closed sets.

We have already noted that a regular quantum logic can be
seen as a family of partially overlapping Boolean algebras.
In our context, these component Boolean algebras are atomic.

Consider a B-cut of $N$, say $\tau = \{b_1, \ldots, b_i, \ldots \}$.
For each $b_i$ in $\tau$, compute $\{ b_i \}'' = \beta_i$.
The closed sets obtained in this way are pairwise orthogonal
in the lattice of closed sets, which means that they are
pairwise concurrent in $N$. Their join gives $B \cup E$.

Then, the set $\{ \beta_1, \ldots, \beta_i, \ldots \}$ is the
set of atoms of a (maximal) Boolean subalgebra of $L(N)$.
More generally, given a family $(A_i)_{i \in I}$ of pairwise
orthogonal (concurrent) closed sets, such that
$\bigvee_{i \in I} A_i = B \cup E$, there is a Boolean
subalgebra of $L(N)$ such that the $A_i$s are its atoms.
%
\subsection{Logic}\label{s:logica}
The statement of Theorem~\ref{t:line_2vs} suggests to look at
the closed sets in $L(N)$ as propositions of a logical language,
which admits a different interpretation for each line in $N$.

In fact, each line selects exactly one closed set for each pair
$(A, A')$, and this selection is consistent with the structure
of the lattice of closed sets: let us say that the proposition
associated to $A$ is true, with respect to a line $\lambda$,
if $\lambda$ crosses $A$, and false otherwise.
Then, we can take $(.)'$ as a negation, while the lattice
operations correspond to the logical connectives, disjunction
(join), and conjunction (meet). Theorem~\ref{t:line_2vs}
guarantees that, under this interpretation, if two propositions
are true, then also their conjunction is true, while the
conjunction of two ``compatible'' propositions is true only
if at least one of them is true, where two propositions are
compatible if their corresponding closed sets are compatible
in the quantum logic $L(N)$ (or, equivalently, if there is
a Boolean subalgebra of $L(N)$ which contains both).

The resulting logic is obviously non-classical, since the lattice
of closed sets is not, in general, distributive. It is, so to
speak, locally classical, in the sense that the set of closed
sets associated to true propositions, projected on a Boolean
subalgebra of $L(N)$ gives an ultrafilter.

Formally, we define the propositional language ${\cal F}_\Pi$
and its interpretation over the orthomodular lattice $L(N)$.
Let $\Pi = (\pi_i)_{i\in I}$ be a set of \emph{propositions}.
Define the set ${\cal F}_\Pi$ of \emph{formulas} over $\Pi$, inductively,
as follows:
\begin{itemize}
  \item [\emph{(i)}] every $\pi_i$ is a formula;
  \item [\emph{(ii)}] if $f_1, f_2$ are formulas, then
        $f_1 \vee f_2$, $f_1 \wedge f_2$, $\neg f_1$, $f_1 \rightarrow f_2$
        are formulas;
  \item [\emph{(iii)}] nothing else is a formula.
\end{itemize}
An interpretation of the language of formulas ${\cal F}_\Pi$ is a pair
$J = \langle h: \Pi \rightarrow L(N), \lambda \in \lines{N} \rangle$.

To each formula $f$, we can associate an element of $L(N)$,
by defining a map $i:{\cal F}_\Pi \rightarrow L(N)$, as follows.
If $f = \pi_i$, then $i(f) = h(\pi_i)$; if $f = \neg(f_1)$, then
$i(f) = (i(f_1))'$;
if $f = f_1 \vee f_2$, then $i(f) = i(f_1) \vee i(f_2)$, where
$\vee$ is the join operation in the lattice $L(N)$;
if $f = f_1 \wedge f_2$, then $i(f) = i(f_1) \wedge i(f_2)$, where
$\wedge$ is the meet operation in the lattice $L(N)$;
if $f = f_1 \rightarrow f_2$, then
$i(f) = (i(f_1))' \vee i(f_2)$.

With this definition, the implication connective and the
partial order on $L(N)$ are related by the following
statement: if $i(f_1) \subseteq i(f_2)$, then
$i(f_1 \rightarrow f_2) = B \cup E$.

The choice of the line $\lambda$ in an interpretation
determines the assignment of
truth values to formulas: a formula is true if $\lambda$
crosses the closed set associated to the formula by the
map $i$.
Formally, we define a satisfiability relation between interpretations
and formulas:
\[
  J \models f \quad \Leftrightarrow \quad i(f) \cap \lambda \not= \emptyset.
\]
This definition is consistent, in the sense that, by direct
verification, one can prove the following:
\begin{enumerate}
  \item $J \models f \wedge g$ if, and only if, $J \models f$
        and $J \models g$
  \item $J \models \neg f$ if, and only if,
        $J \not\models f$
  \item $J \models f \vee g$ if, and only if,
        $i(f) \comp i(g)$, and either $J \models f$ or $J \models g$;
  \item $J \models f \rightarrow g$ if, and only if,
        $i(f) \subseteq i(g)$.
\end{enumerate}
The case of disjunctive formulas can be illustrated by means
of a simple example. Consider the net below
(which might be a fragment of a larger net).
\begin{center}
  \includegraphics[width=0.2\linewidth]{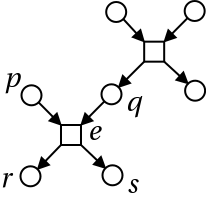}
\end{center}
Let $f$ and $g$ be two propositions interpreted,
respectively, over $\{p\}$ and $\{r\}$, which are
closed sets. Then
$i(f \vee g) = \{p, q, e, r, s\}$. With respect to a line
$\lambda$ containing $\{q, e, s\}$, the formula $f \vee g$ is true,
while $f$ and $g$ are false. In this
case $i(f)$ and $i(g)$ are not compatible in $L(N)$.

Suppose now to interpret $f$ and $g$ over $\{p\}$ and $\{q\}$,
which are compatible in $L(N)$. The interpretation of $f \vee g$
is the same as before, namely $\{p, q, e, r, s\}$, but any
line crossing this closed set is bound to cross either
$\{p\}$ or $\{q\}$, so that either $f$ or $g$ is true.
\section{Conclusion and prospects}
A first connection between orthomodular structures
and concurrency theory emerged by studying an
abstract notion of local state in automata and in
Petri nets (see~\cite{BFP03}).

Later, a different connection was found, which bears a more
direct relation with special relativity theory.
Several authors had previously shown that a closure operator, and a
corresponding orthomodular lattice, can be derived from 
the ``spacelike'' relation between points in Minkowski spacetime
(\cite{C02,CJ77}).

A similar construction was then applied to discrete partially
ordered sets modelling the history of concurrent, or
distributed, system. In the discrete case, the orthomodularity
of the resulting lattice of closed sets depends on a feature
of the partially ordered set which was called
\emph{N-density} by C.A. Petri (\cite{BPR10}).

In this context, the specific character of the causal nets
introduced by Petri is the distinction between synchronization events
and local properties changed by the occurrence of the events. With
this distinction, \emph{lines} can be interpreted as \emph{signals}
whose status is changed by the interaction event with other signals
and preserving their local status until another interaction
occurs. This in analogy with flows of particles in space whose mutual
interactions are \emph{collisions}.

For causal nets and in partial orders derived from causal nets,
density properties were studied mainly with the aim of providing
a sound set of axioms for the definition of the causal structures
representing concurrent processes of systems.
In this respect, \cite{BF88} gives a
comprehensive survey of the properties of partially ordered sets
related to causal nets. In particular, the relations between
N-density and K-density
are presented in the specific context of models of concurrent systems.

Starting from what we have presented here, we will undertake
several further steps: characterize those orthomodular lattices that can
be obtained as lattices of closed sets induced by concurrency;
building a causal net whose lattice of closed sets is
isomorphic to a given lattice; study lattices of closed sets
induced by concurrency in other classes of Petri nets;
give a meaning to the logical language here defined.
\section*{Acknowledgments}
This work was partially supported by MIUR and by
MIUR-PRIN 2010/2011 grant `Automi e Linguaggi Formali:
Aspetti Matematici e Applicativi', code H41J12000190001.
\bibliographystyle{eptcs}
\bibliography{bfp_qpl2012}
\end{document}